\documentclass[final,twocolumn]{IEEEtran}
\usepackage{amsthm,amssymb,mathptmx,amsbsy,amsmath,mathtools}
\usepackage{epstopdf}

\newtheorem{theorem}{Theorem}
\newtheorem{lemma}{Lemma}
\newtheorem{Definition}{Definition}
\newcommand{\mathcalbold}[1]{\ensuremath{\boldsymbol{\mathcal{#1}}}}

\author{Abla Kammoun}
\title{Robust blind methods using $\ell_p$ quasi-norms}

\begin{document}
\maketitle
\begin{abstract}
	It was shown in a previous work that some blind methods can be made robust to channel order overmodeling by using the $\ell_1$ or $\ell_p$ quasi-norms. However, no theoretical argument has been provided to support this statement. In this work, we study the robustness of subspace blind based methods using $\ell_1$ or $\ell_p$ quasi-norms. For the $\ell_1$ norm, we provide the sufficient and necessary condition that the channel should satisfy in order to ensure its identifiability in the noise-less case. We then study its frequency of occurrence, and deduce the effect of channel parameters on the robustness of blind subspace methods using $\ell_1$ norms. 
\end{abstract}
\section{Introduction}
Despite being proposed several decades ago, blind methods failed so far to be commercially deployed for civil communication products. This can be  attributed to several practical difficulties, thereby limiting their use to signal processing applications where training cannot be used. Among the most difficult challenges that need to be addressed is the sensitivity of a number of blind methods to the errors of channel over-modeling, \cite{karim-97,lisa-99}. This is for example the case of conventional subspace methods which are known to exhibit a high sensitivity in case of channel order over-modeling, even in the noiseless case. Actually, in the noiseless case, the channel is identified as the vector that spans the one dimensional kernel of a certain matrix ${\bf Q}$ which depends on the second order statistics of the received signal. But, when the channel order is over-estimated, the kernel of matrix ${\bf Q}$ is no longer a line but rather a vector space whose dimensions depend on the over-estimated order. The issue that should be raised is thus how to choose the right direction among all the vectors that span the kernel of matrix ${\bf Q}$.

To deal with this problem, a large effort has been investigated in either adding to sensitive methods a new feature that estimates efficiently the channel order \cite{gorokhov99} or proposing methods that are robust to channel order over-modeling. In this context, a new blind technique for sparse channel estimation has been recently proposed. To select the channel vector, this technique considers the joint utilization of a sparsity criterion which can be measured using a $\ell_p$ quasi norm ($0<p\leq 1$). It was noted by simulations that in this way, blind methods like cross relation and deterministic likelihood based techniques become robust to the over-estimation of the channel order \cite{bey08,bey08-b}. For non-sparse channels, introducing likewise a sparsity criterion shall enhance the channel identifiability probability. The reason is that over-modeling the channel is equivalent to zero-padding the channel vector, which artificially becomes sparse. Moreover, in the case of blind subspace methods, it can be shown that selecting the right direction is equivalent to choosing the vector that maximizes its sparsity. 

In this work, we propose to study the robustness of subspace based methods using $\ell_p$ quasi-norms for non-sparse channels. We derive necessary and sufficient condition for channel identifiability when  considering the $\ell_1$ norm as well as a sufficient condition if the $\ell_p$ quasi-norm is used. We then derive a lower bound on the probability that the necessary and sufficient condition   holds. This lower bound allows us to study the effect of the system parameters on the channel identifiability probability. For instance, we prove that increasing the number of antennas improves significantly the  channel identifiability probability, as opposed to increasing the number of channel coefficients, which tends to reduce it. 

We organize this paper as follows: we provide in section \ref{sec:model} the channel model. Section \ref{sec:identifiability} is devoted to the derivation of the channel identifiability conditions. Finally, a probabilistic analysis is performed before providing in section \ref{sec:simulation}, the simulation results. 
\section{System model and Blind subspace methods}
\label{sec:model}
Consider a SIMO communication link in which the receiver equipped with $M$ antennas receives data stemming from a single antenna transmitter. The received vector at time $k$ writes as:
$$
{\bf y}_k=\sum_{l=0}^L {\bf h}_ls_{k-l}+{\bf v}_k
$$
where ${\bf h}_l$ is the channel impulse response vector corresponding to the $l$-th tap and ${\bf v}_k$ denotes the additive Gaussian noise vector. 
Define ${\bf h}=\left[{\bf h}_0^{\mbox{\tiny T}},\cdots,{\bf h}_L^{\mbox{\tiny T}}\right]^{\mbox{\tiny T}}$ be the channel vector. 

Stacking $n$ observations of vector ${\bf y}$ in a $(n+1)M$ vector $\overline{\bf y}=\left[{\bf y}_k^{\mbox{\tiny T}},\cdots,{\bf y}_{k-n}^{\mbox{\tiny T}}\right]^{\mbox{\tiny T}}$, we will get:
$$
{\overline{\bf y}}=\mathcalbold{ I}_n({\bf h}){\bf s}_k+{\bf v}_k
$$
where $\mathcalbold{I}_n({\bf h})$ is the $M(n+1)\times (L+n+1)$ block-Toepltiz matrix:
$$
\mathcalbold{I}_n({\bf h})=\begin{bmatrix}
	{\bf h}_0 & \cdots & {\bf h}_L & & {\bf 0} \\
				       & {\bf h}_0 & \cdots & {\bf h}_L & \\
				       & & \ddots & \ddots & \\
	{\bf 0}& &{\bf h}_0& \cdots & {\bf h}_L
\end{bmatrix}
$$
Blind methods are based on the second order statistics of the received signal $\overline{\bf y}_n$. Assuming that:

{\bf A.1} The transmitter symbols $s_k$ are independent and identically distributed (i.i.d) with mean zero and variance $1$,

{\bf A.2} The undergone noise is white with variance $\sigma^2$,

The covariance matrix of $\overline{\bf y}_n$ is given by:
$$
{\bf R}=\mathbb{E}\overline{\bf y}_n\overline{\bf y}_n^{\mbox{\tiny T}}=\mathcalbold{I}_n({\bf h})\mathcalbold{I}_n({\bf h})^{\mbox{\tiny T}}+\sigma^2{\bf I}_{(n+1)M}.
$$
Under the assumption that 

{\bf A.3} The subchannels of vector ${\bf h}$ are real and have no zeros in common. Moreover, $n\geq L$ and $M> 1$, 

The rank of $\mathcalbold{I}_n({\bf h})$ is equal to $L+n+1$. Matrix ${\bf R}$ has then exactly $L+n+1$   eigenvalues which correspond to the signal subspace, whereas the remaining eigenvalues correspond to the noise subspace. It admits thus the following  singular value decomposition:
$$
{\bf R}={\bf U}\boldsymbol{\Lambda}{\bf U}^{\mbox{\tiny T}}+\sigma^2{\bf N}{\bf N}^{\mbox{\tiny T}}
$$
where $\Lambda$ is diagonal with $p=L+n+1$ nonzero diagonal elements and ${\bf U}$ and ${\bf N}$ are unitary matrices which span respectively the signal and noise spaces. 
Define $\boldsymbol{\Pi}={\bf N}{\bf N}^{\mbox{\tiny T}}$ the noise projector. In case of the absence of the noise term, the noise projector satisfies:
\begin{equation}
\boldsymbol{\Pi}\mathcalbold{I}_n({\bf h})={\bf 0}.
\label{eq:pi_h}
\end{equation}
Let $\boldsymbol{\Pi}=\left[\boldsymbol{\pi}_0,\cdots,\boldsymbol{\pi}_M\right]$. 
Subspace based methods consists in searching the solution of the following equation:
\begin{equation}
\boldsymbol{\Pi}\mathcalbold{I}_N({\bf f})={\bf 0}
\label{eq:f}
\end{equation}

If the number of coefficients $L$ in ${\bf f}$ is accurately known, then there exists up to a constant term only one vector ${\bf f}$ of length $L$ satisfying \eqref{eq:f}. If $L'\geq L$, then, the solutions of $\eqref{eq:f}$ belong to a vector space of dimension $L'-L+1$ spanned by the columns of the following $M(L'+1)\times L'-L+1$ Toeplitz matrix \cite[Lemma2]{karim-97}:

$$
{\bf H}=\begin{bmatrix}
{\bf h}_0 & \cdots & {\bf 0}\\
\vdots & \ddots & \vdots\\
{\bf h}_L & \ddots & {\bf h}_0 \\
\vdots & \ddots & \vdots \\
{\bf 0} & \cdots & {\bf h}_L\\
 \end{bmatrix}
$$
Obviously, the channel vector ${\bf h}$ is almost surely the vector that exhibits the highest sparsity among all vectors in the range space of matrix ${\bf H}$. In case the channel order is unknown, the channel vector can be thus recovered by solving the following optimization problem:
\begin{equation}
	\begin{aligned}
		& \underset{{\bf f}}{\text{minimize}}
		& & \|{\bf f}\|_0 \\
	& \text{subject to}
		& & {\bf f}={\bf H}{\bf s}\\
  		& & & s_1 =1
	\end{aligned}
	\label{eq:ProblemP0}
	\tag{$P_0$}
\end{equation}
where the condition ${s}_1=1$ is introduced to exclude the trivial zero solution ($s_1$ refers to the first entry of vector ${\bf s}$.).
Solving \eqref{eq:ProblemP0} requires an intractable combinatorial search, thereby reducing its interest for real-time applications. A common used way is to substitute the quasi norm $\ell_0$ which is non-convex and non-continuous by the $\ell_p$ quasi norm, thereby leading to the following optimization problem:
\begin{equation}
	\begin{aligned}
		& \underset{{\bf f}}{\text{minimize}}
		& & \|{\bf f}\|_p \\
	& \text{subject to}
		& & {\bf f}={\bf H}{\bf s}\\
		& & & s_1=1
	\end{aligned}
	\label{eq:ProblemPp}
	\tag{$P_p$}
	\end{equation}
	where $\|{\bf x}\|_p^p=\sum_i |x_i|^p$.
\section{Channel Identifiability Conditions}
\label{sec:identifiability}
It has been shown by using simulations in \cite{bey08-b}, that the use of $\ell_1$ and $\ell_p$ norms can enhance the robustness of blind subspace methods. The aim of this paper is to provide theoretical arguments that account for the observed aspect. More explicitly, we will provide necessary and/or sufficient conditions that imply that the solution of problem $(P_p)$ correspond to the zero-padded channel vector in the noiseless case, a case which we refer to as channel identifiability.

Before that, we shall introduce some matrices depending on the channel, which will serve later to formulate the identifiability conditions. Since $s_1$ has to be set to $1$, problem $(P_p)$ could be written as:
\begin{equation}
	\begin{aligned}
		& \underset{{\bf g}\in\mathbb{R}^{L'-L}}{\text{minimize}}
		& & \|{\bf h}_z+\widetilde{\bf H}g\|_p \\
	\end{aligned}
	\tag{$P_p$}
	\end{equation}
where ${\bf h}_z=\left[{\bf h}_1^{\mbox{\tiny T}},\cdots,{\bf h}_L^{\mbox{\tiny T}}\right]^{\mbox{\tiny T}}$  and $\widetilde{\bf H}$ is the $ML'\times L'-L$ block-Toeplitz matrix having the same shape as ${\bf H}$. Partition $\widetilde{\bf H}$ as $$\widetilde{\bf H}=\left[\begin{array}{c}{\bf A}\\\hline{\bf B} \end{array}\right],$$ where ${\bf A}$ is formed by selecting the first $ML$ rows of $\widetilde{\bf H}$ whereas ${\bf B}$ correspond to the sub-matrix composed of the last $M(L'-L)$ rows of $\widetilde{\bf H}$, i.e. ${\bf A}$ and ${\bf B}$ are given by:
	\begin{align*}
{\bf B}&=
\begin{bmatrix}
 {\bf h}_L & {\bf h}_{L-1} & \cdots & {\bf h}_{L-\delta+1}\\
{\bf 0} & {\bf h}_L & & \vdots \\
\vdots & \ddots & \ddots &\vdots\\
{\bf 0} & \cdots & {\bf 0} & {\bf h}_L
\end{bmatrix}\\
{\bf A}&=
\begin{bmatrix}
 {\bf h}_0 & {\bf 0}& \cdots& {\bf 0} \\
{\bf h}_1 & \ddots & & \\
\vdots & \ddots & &{\bf h}_0\\
\vdots & \vdots & &\vdots \\
{\bf h}_{L-1} & {\bf h}_{L-2}& \cdots & {\bf h}_{L-\delta}
\end{bmatrix}.
\end{align*}
\subsection{Necessary and sufficient condition for $\ell_1$ norm}
Unlike the $\ell_p$ quasi-norm ($p<1$), the $\ell_1$ norm is convex. It is thus possible to derive necessary and sufficient condition that implies channel identifiability.

Our theorem is stated as follows:
\begin{theorem}{Necessary and sufficient condition}

Let ${\bf v}=\left[{\rm sign}({\bf h}_1)^{\mbox{\tiny T}},\cdots,{\rm sign}({\bf h}_L)^{\mbox{\tiny T}}\right]^{\mbox{\tiny T}}$ and assume that $L > L'-L \geq 1$. Then the necessary and sufficient condition for channel identifiability can be expressed as :
\begin{equation}
\frac{\left|{\bf v}^{\mbox{\tiny T}}{\bf A}{\bf g}\right|}{\|{\bf Bg}\|_1}\leq 1  \ \  \forall {\bf g}\ \ \in\mathbb{R}^{L'-L}
\label{eq:necessary}
\end{equation}
\label{th:necessary}
\end{theorem}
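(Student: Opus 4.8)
The plan is to turn channel identifiability into an unconstrained convex minimization and read the condition off the first-order optimality criterion. With $s_1=1$ the cost of $(P_p)$ at $p=1$ is, after discarding the constant block $\|{\bf h}_0\|_1$ and using the block-Toeplitz splitting $\widetilde{\bf H}=[{\bf A}^{\mbox{\tiny T}},{\bf B}^{\mbox{\tiny T}}]^{\mbox{\tiny T}}$ together with the hypothesis $L>L'-L\ge 1$ (which is what makes ${\bf h}_z$ supported only on the ${\bf A}$-block and ${\bf B}$ block-upper-triangular with diagonal blocks ${\bf h}_L$),
$$
f({\bf g})=\|{\bf h}_z+{\bf A}{\bf g}\|_1+\|{\bf B}{\bf g}\|_1 , \qquad {\bf g}\in\mathbb{R}^{L'-L}.
$$
Channel identifiability is then exactly the statement that ${\bf g}={\bf 0}$ is a global minimizer of $f$.

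Since $f$ is a sum of $\ell_1$-norms of affine maps it is convex, so ${\bf g}={\bf 0}$ minimizes $f$ globally if and only if the one-sided directional derivative obeys $f'({\bf 0};{\bf g})\ge 0$ for every direction ${\bf g}$. Computing this derivative termwise: because the entries of ${\bf h}_z$ are (generically, and as the presence of the sign vector ${\bf v}$ presupposes) all nonzero, ${\bf h}_z+t{\bf A}{\bf g}$ keeps the sign pattern of ${\bf h}_z$ for small $t$, so $t\mapsto\|{\bf h}_z+t{\bf A}{\bf g}\|_1$ is differentiable at $0$ with derivative ${\bf v}^{\mbox{\tiny T}}{\bf A}{\bf g}$, while $t\mapsto\|t{\bf B}{\bf g}\|_1=|t|\,\|{\bf B}{\bf g}\|_1$ has right derivative $\|{\bf B}{\bf g}\|_1$. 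Hence $f'({\bf 0};{\bf g})={\bf v}^{\mbox{\tiny T}}{\bf A}{\bf g}+\|{\bf B}{\bf g}\|_1$, and imposing $f'({\bf 0};{\bf g})\ge 0$ simultaneously for ${\bf g}$ and $-{\bf g}$ collapses the whole family of inequalities to $|{\bf v}^{\mbox{\tiny T}}{\bf A}{\bf g}|\le\|{\bf B}{\bf g}\|_1$ for all ${\bf g}$.

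To get the homogeneous form \eqref{eq:necessary}, observe that ${\bf h}_L\ne{\bf 0}$ forces ${\bf B}$ (block upper-triangular with diagonal blocks ${\bf h}_L$) to have full column rank, so $\|{\bf B}{\bf g}\|_1>0$ whenever ${\bf g}\ne{\bf 0}$ and the inequality may be divided through by $\|{\bf B}{\bf g}\|_1$. I expect the main obstacle to be the careful handling of the non-differentiability of the $\ell_1$ cost: the clean expression for $f'({\bf 0};{\bf g})$ hinges on ${\bf h}_z$ having no vanishing coordinate — a coordinate with $h_{z,j}=0$ would contribute an extra $|({\bf A}{\bf g})_j|$ to the derivative and break the symmetrization — and one should also keep in mind that the non-strict inequality certifies ${\bf g}={\bf 0}$ as \emph{a} minimizer, strict inequality being the natural sufficient condition for it to be the \emph{unique} one.
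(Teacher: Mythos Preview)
Your argument is correct and follows essentially the same route as the paper: reduce identifiability to ${\bf g}={\bf 0}$ being a global minimizer of the convex function $f({\bf g})=\|\widetilde{\bf h}+{\bf A}{\bf g}\|_1+\|{\bf B}{\bf g}\|_1$, characterize this via nonnegativity of the one-sided directional derivative $f'({\bf 0};{\bf g})={\bf v}^{\mbox{\tiny T}}{\bf A}{\bf g}+\|{\bf B}{\bf g}\|_1$, and symmetrize in $\pm{\bf g}$. If anything, your write-up is more careful than the paper's about why the derivative takes this clean form (no vanishing coordinate in ${\bf h}_z$) and why one may divide by $\|{\bf B}{\bf g}\|_1$ (full column rank of ${\bf B}$ from ${\bf h}_L\neq{\bf 0}$).
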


\begin{proof}
	See Appendix \ref{app:necessary}
\end{proof}

\subsection{Sufficient condition for $\ell_p$ quasi-norm}
Since the $\ell_p$ quasi-norm is a non-convex function, the problem might have many local minima. Nevertheless, we still can find a sufficient condition that ensures that the channel can be identified as a local minimum of problem $(P_p)$. The result is stated  in the following theorem:
\begin{theorem}
Let ${\bf v}=p\begin{bmatrix}{\rm sign}({\bf h}_1) \\ \vdots \\ {\rm sign}({\bf h}_L)\end{bmatrix}\bullet  \begin{bmatrix}|{\bf h}_1|^{p-1}\\ \vdots |{\bf h}_L|^{p-1}
\end{bmatrix}$  where $\bullet$ denotes the Hadamard (element by element product). If the following condition is satisfied:
\begin{equation}
\frac{\left|{\bf v}^{\mbox{\tiny T}}{\bf A}{\bf g}\right|}{\|{\bf Bg}\|_1}\leq 1  \ \  \forall {\bf g}\ \ \in\mathbb{R}^{L'-L}
\label{eq:sufficient}
\end{equation}
Then, the channel can be identified as a local minimum of $(P_p)$.
\label{th:sufficient}
\end{theorem}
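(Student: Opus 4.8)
The plan is to prove directly that, under \eqref{eq:sufficient}, the point ${\bf g}={\bf 0}$ is a strict local minimizer of the reduced objective $f({\bf g})=\|{\bf h}_z+\widetilde{\bf H}{\bf g}\|_p$; since $t\mapsto t^{1/p}$ is increasing on $[0,\infty)$, this is the same as $f({\bf g})^p=\|{\bf h}_z+\widetilde{\bf H}{\bf g}\|_p^p$ having a strict local minimum at ${\bf 0}$, and it is the latter that I would manipulate. Exactly as in the proof of Theorem~\ref{th:necessary}, I would use the partition $\widetilde{\bf H}=[{\bf A}^{\mbox{\tiny T}},{\bf B}^{\mbox{\tiny T}}]^{\mbox{\tiny T}}$ to split the coordinates of ${\bf h}_z+\widetilde{\bf H}{\bf g}$ into the block carrying the channel taps, on which this vector equals ${\bf h}_z+{\bf A}{\bf g}$, and the zero-padded block of length $M(L'-L)$, on which it equals ${\bf B}{\bf g}$, so that
\[
f({\bf g})^p=\|{\bf h}_z+{\bf A}{\bf g}\|_p^p+\|{\bf B}{\bf g}\|_p^p .
\]
Throughout I assume, as is implicitly required for the vector ${\bf v}$ in the statement to be defined, that every entry of ${\bf h}_1,\dots,{\bf h}_L$ is nonzero; together with ${\bf h}_L\neq{\bf 0}$ this makes ${\bf B}$, which is block-triangular Toeplitz with ${\bf h}_L$ on its diagonal, injective, so $\|{\bf B}{\bf g}\|_2\geq\sigma\|{\bf g}\|_2$ for some $\sigma>0$.

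I would then run the estimate over two successively smaller balls around ${\bf g}={\bf 0}$. On a ball small enough that every entry of ${\bf A}{\bf g}$ is, in magnitude, below $\tfrac12\min_i|({\bf h}_z)_i|$, each coordinate map $t\mapsto|({\bf h}_z)_i+t|^p$ stays away from the kink at $0$, hence is $C^2$ with second derivative bounded by a constant $C_0$ depending only on $p$ and $\min_i|({\bf h}_z)_i|$; a first-order Taylor expansion with Lagrange remainder, summed over coordinates, then gives
\[
\|{\bf h}_z+{\bf A}{\bf g}\|_p^p-\|{\bf h}_z\|_p^p\ \geq\ {\bf v}^{\mbox{\tiny T}}{\bf A}{\bf g}-\tfrac{C_0}{2}\|{\bf A}{\bf g}\|_2^2,
\]
with ${\bf v}$ precisely the gradient vector of the statement. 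Shrinking the ball further so that $\|{\bf B}{\bf g}\|_\infty\leq\tfrac12$, the elementary inequality $|t|^p-|t|\geq c_p|t|^p$ for $|t|\leq\tfrac12$, where $c_p=1-2^{-(1-p)}>0$ (this is where $p<1$ enters), summed over coordinates and combined with the injectivity bound and $\|{\bf x}\|_p^p\geq(\|{\bf x}\|_2/\sqrt{M(L'-L)})^p$, yields $\|{\bf B}{\bf g}\|_p^p-\|{\bf B}{\bf g}\|_1\geq c\,\|{\bf g}\|_2^{p}$ for some $c>0$.

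Finally I would assemble the pieces. Condition \eqref{eq:sufficient} says exactly that ${\bf v}^{\mbox{\tiny T}}{\bf A}{\bf g}\geq-|{\bf v}^{\mbox{\tiny T}}{\bf A}{\bf g}|\geq-\|{\bf B}{\bf g}\|_1$, so the two displayed inequalities combine to
\[
\begin{aligned}
f({\bf g})^p-f({\bf 0})^p
&\ \geq\ \bigl(\|{\bf B}{\bf g}\|_p^p-\|{\bf B}{\bf g}\|_1\bigr)-\tfrac{C_0}{2}\|{\bf A}{\bf g}\|_2^2\\
&\ \geq\ c\,\|{\bf g}\|_2^{p}-\tfrac{C_0}{2}\|{\bf A}\|^2\|{\bf g}\|_2^{2},
\end{aligned}
\]
which is strictly positive for all sufficiently small ${\bf g}\neq{\bf 0}$ since $p<2$. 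As ${\bf H}$ has full column rank (the column-space description recalled in Section~\ref{sec:model}), local minimizers transfer between the ${\bf g}$ parametrization and the original ${\bf f}$ one, so the channel can be identified as a strict local minimum of $(P_p)$, which is the claim.

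The heart of the argument --- and the only place where $p<1$ is genuinely used --- is the competition of three scales in ${\bf g}$: the linear term ${\bf v}^{\mbox{\tiny T}}{\bf A}{\bf g}$ coming from the smooth block, of order $\|{\bf g}\|$ and of indefinite sign; its curvature remainder, of order $\|{\bf g}\|^2$; and the sparsity-promoting term $\|{\bf B}{\bf g}\|_p^p$, of order $\|{\bf g}\|^p$. Condition \eqref{eq:sufficient} is precisely what lets $\|{\bf B}{\bf g}\|_1$ absorb the sign-indefinite linear part, after which the surplus $\|{\bf B}{\bf g}\|_p^p-\|{\bf B}{\bf g}\|_1\sim\|{\bf g}\|^p$ dominates the $O(\|{\bf g}\|^2)$ remainder because $p<1<2$. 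I do not expect a deep obstacle here; the real work is the bookkeeping of the two radii and of the constants $C_0$, $c_p$, $\sigma$. It is worth noting that this domination degenerates at $p=1$, consistent with the $\ell_1$ case requiring the sharper, global analysis of Theorem~\ref{th:necessary}.
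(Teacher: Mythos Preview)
Your argument is correct and follows essentially the same route as the paper's: split $f({\bf g})^p$ along the $[{\bf A};{\bf B}]$ partition, linearize the smooth ${\bf A}$-block to expose ${\bf v}^{\mbox{\tiny T}}{\bf A}{\bf g}$, use hypothesis~\eqref{eq:sufficient} to absorb this linear term into $\|{\bf B}{\bf g}\|_1$, and then exploit $p<1$ through the fact that $\|{\bf B}{\bf g}\|_p^p$ dominates $\|{\bf B}{\bf g}\|_1$ near ${\bf 0}$. Your bookkeeping is slightly tighter than the paper's---a quadratic Lagrange remainder in place of an $\epsilon$-linear one, and the explicit elementary inequality $|t|^p-|t|\geq c_p|t|^p$ for $|t|\leq\tfrac12$---which has the pleasant side effect of yielding a \emph{strict} local minimum.
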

\begin{proof}
See Appendix \ref{app:sufficient}
\end{proof}
{\bf Remark} Note that for each channel realization, there exists $p$ such that the channel identifiability condition \eqref{eq:sufficient} occurs.  
\section{Probabilistic Analysis}
The necessary and sufficient conditions being provided, a natural question that arises is how often they occur. Obviously, this will depend on the distribution of the channel vector elements which have to be chosen. We will consider next the assumption that the channel vector elements are drawn from the Gaussian distribution with zero mean and variance $\frac{1}{L+1}$ :

{\bf A.4} The entries of the channel vectors ${\bf h}_1,\cdots, {\bf h}_L$ are real Gaussian with zero mean and variance $\frac{1}{L+1}$. 

To determine a lower bound on the channel identifiability probability, we will rely on the techniques derived in \cite{gribonval10,gribonval08}. Actually, in the same way as in \cite{gribonval10}, we notice that the necessary and sufficient conditions in the $\ell_1$ and $\ell_p$ norm possesses an equivalent characterization  as shown in the following lemma whose proof can be found in \cite{gribonval10}:
\begin{lemma}
	Let ${\bf B}$ be an $n\times N$ matrix with rank $N$. For any vector $z$ define:
	$$
	\|z\|_B:= \sup_{{\bf x}\neq 0} \frac{\left|{\bf z}^{\mbox{\tiny T}}{\bf x}\right|}{\|{\bf B}{\bf x}\|_1}
	$$
	We have then the equivalent characterization :
	$$
	\|{\bf z}\|_B= \min \|{\bf d}\|_{\infty} \ \ \textnormal{under the constraint} \ \ {\bf B}^{\mbox{\tiny T}}{\bf d} = {\bf z}
	$$
	\label{lemma:b}
\end{lemma}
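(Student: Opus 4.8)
The plan is to recognize $\|\cdot\|_B$ as the norm dual, with respect to the standard Euclidean pairing, to the norm $\nu(\mathbf{x}):=\|\mathbf{B}\mathbf{x}\|_1$ on $\mathbb{R}^N$, and then to identify that dual norm through a Hahn--Banach extension argument (this is essentially the route taken in \cite{gribonval10}). First I would note that, since $\mathbf{B}$ has rank $N$, the map $\mathbf{x}\mapsto\mathbf{B}\mathbf{x}$ is injective, so $\nu$ is a genuine norm on $\mathbb{R}^N$ and not merely a seminorm; all norms on a finite-dimensional space being equivalent, the supremum defining $\|\mathbf{z}\|_B$ is finite, and by $1$-homogeneity and symmetry it equals $\sup\{\mathbf{z}^{\mbox{\tiny T}}\mathbf{x}:\nu(\mathbf{x})\le 1\}$, i.e. the value at $\mathbf{z}$ of the dual norm $\nu^{*}$. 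Likewise, since $\mathbf{B}^{\mbox{\tiny T}}$ is surjective onto $\mathbb{R}^N$, the feasible set $\{\mathbf{d}:\mathbf{B}^{\mbox{\tiny T}}\mathbf{d}=\mathbf{z}\}$ is a nonempty closed affine subspace, on which $\|\cdot\|_{\infty}$ attains its infimum, so the right-hand side is a bona fide minimum.

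For the bound $\|\mathbf{z}\|_B\le\min\{\|\mathbf{d}\|_{\infty}:\mathbf{B}^{\mbox{\tiny T}}\mathbf{d}=\mathbf{z}\}$ I would take any feasible $\mathbf{d}$ and apply H\"older's inequality between $\ell_1$ and $\ell_{\infty}$: for every $\mathbf{x}$, $|\mathbf{z}^{\mbox{\tiny T}}\mathbf{x}|=|\mathbf{d}^{\mbox{\tiny T}}\mathbf{B}\mathbf{x}|\le\|\mathbf{d}\|_{\infty}\,\|\mathbf{B}\mathbf{x}\|_1$; dividing by $\|\mathbf{B}\mathbf{x}\|_1$ (nonzero for $\mathbf{x}\neq 0$, by full column rank) and taking the supremum over $\mathbf{x}\neq 0$ gives $\|\mathbf{z}\|_B\le\|\mathbf{d}\|_{\infty}$, and then the infimum over feasible $\mathbf{d}$.

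The substantive step is the reverse inequality. Set $c:=\|\mathbf{z}\|_B$ and work on the column space $C:=\{\mathbf{B}\mathbf{x}:\mathbf{x}\in\mathbb{R}^N\}\subseteq\mathbb{R}^n$, endowed with the restriction of $\|\cdot\|_1$. On $C$, define the linear functional $\ell(\mathbf{B}\mathbf{x}):=\mathbf{z}^{\mbox{\tiny T}}\mathbf{x}$, which is well defined by injectivity of $\mathbf{B}$ and satisfies $|\ell(\mathbf{B}\mathbf{x})|\le c\,\|\mathbf{B}\mathbf{x}\|_1$ by the very definition of $c$. By the Hahn--Banach theorem, $\ell$ extends to a linear functional on all of $\mathbb{R}^n$, necessarily of the form $\mathbf{w}\mapsto\mathbf{d}^{\mbox{\tiny T}}\mathbf{w}$, with $|\mathbf{d}^{\mbox{\tiny T}}\mathbf{w}|\le c\,\|\mathbf{w}\|_1$ for all $\mathbf{w}$, i.e. $\|\mathbf{d}\|_{\infty}\le c$; restricting to $C$ yields $\mathbf{d}^{\mbox{\tiny T}}\mathbf{B}\mathbf{x}=\mathbf{z}^{\mbox{\tiny T}}\mathbf{x}$ for all $\mathbf{x}$, hence $\mathbf{B}^{\mbox{\tiny T}}\mathbf{d}=\mathbf{z}$. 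This $\mathbf{d}$ is feasible with $\|\mathbf{d}\|_{\infty}\le c=\|\mathbf{z}\|_B$, which both establishes the remaining inequality and shows the minimum is achieved; combining with the previous paragraph gives the claimed identity. I expect the only delicate point to be this strong-duality direction: it genuinely requires an extension/separation argument (Hahn--Banach, or equivalently linear-programming duality) rather than an explicit formula, together with the observation that $\mathrm{rank}\,\mathbf{B}=N$ is exactly what keeps both the primal ratio finite and the dual feasible set nonempty. Everything else reduces to the standard $\ell_1$--$\ell_{\infty}$ H\"older duality.
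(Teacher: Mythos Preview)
Your argument is correct: the H\"older bound gives one inequality and the Hahn--Banach extension from the column space of $\mathbf{B}$ gives the other, with the rank hypothesis used exactly where you say. The paper does not supply its own proof of this lemma but defers to \cite{gribonval10}, and your write-up follows that same dual-norm/extension route, so there is nothing to contrast.
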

Applying lemma \ref{lemma:b}, conditions \eqref{eq:sufficient} and \eqref{eq:necessary} are equivalent to stating that the solution of the following problem :
$$
\min \|{\bf d}\|_\infty \ \  \textnormal{under the constraint} \ \  {\bf B}^{\mbox{\tiny T}}{\bf d} = {\bf A}^{\mbox{\tiny T}} {\bf v}
$$
achieves a minimum value which is less than $1$. In other words, the necessary and sufficient conditions are satisfied if and only if there exists a vector ${\bf d}$ with $\|{\bf d}\|_{\infty} \leq 1$ such that ${\bf B}^{\mbox{\tiny T}}{\bf d}={\bf A}^{\mbox{\tiny T}}{\bf v}$. 

Since ${\rm rank}({\bf B})=\delta$ almost surely, the channel identifiability will hold if the following conditions are satisfied:
\begin{itemize}
	\item The image of the cube by matrix ${\bf B}^{\mbox{\tiny T}}$ contains a ball of radius $\alpha$,
	\item The vector ${\bf A}^{\mbox{\tiny T}}{\bf v}$ satisfies $\|{\bf A}^{\mbox{\tiny T}}{\bf v}\|_2\leq \alpha$. 
	\end{itemize}
	Let $\mathcal{P}$ denote the probability that the channel identifiability holds, and $E_{\alpha}^1$ and $E_{\alpha}^2$ be the events given by:
	\begin{align*}
		E_{\alpha}^1&=\left\{\textnormal{The image of the cube by} \ \ {\bf B}^{\mbox{\tiny T}} \ \ \textnormal{contains a ball of radius} \ \ \alpha\right\} \\
		E_{\alpha}^2&=\left\{\|{\bf A}^{\mbox{\tiny T}}{\bf v}\|\leq \alpha\right\}
	\end{align*}
	Then, $\mathcal{P}$ can be lower-bounded as :
	$$
	\mathcal{P}\geq \mathbb{P}\left\{\bigcup_{\alpha}E_{\alpha}^1\cap E_{\alpha}^2\right\}\geq \max_{\alpha}\mathbb{P}\left(E_{\alpha}^1\cap E_{\alpha}^2\right).$$
Next, we will consider the computation of the lower bound for the $\ell_1$ norm when $\delta=1$. More explicitly, we show the following theorem:
\begin{theorem}
	For $\delta=1$, the probability $\mathcal{P}$ that the necessary and sufficient condition \eqref{eq:necessary} occurs is greater than:
\begin{equation}
	\mathcal{P} \geq \max_{\epsilon \in \left[0,1\right]}\left(1-\exp(-\frac{M\epsilon^2}{\pi})\right)\frac{1}{\sqrt{\pi}}\gamma(\frac{1}{2},\frac{M(1-\epsilon)^2}{\pi L})
	\label{eq:probability}
\end{equation}
\end{theorem}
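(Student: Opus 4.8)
The plan is to apply the probabilistic reduction established above, namely $\mathcal{P}\geq\max_{\alpha}\mathbb{P}\big(E_{\alpha}^1\cap E_{\alpha}^2\big)$, after specializing everything to $\delta=L'-L=1$. In that case the matrices collapse: $\mathbf{B}=\mathbf{h}_L$ is a single column of $\mathbb{R}^M$, $\mathbf{A}=[\mathbf{h}_0^{\mbox{\tiny T}},\dots,\mathbf{h}_{L-1}^{\mbox{\tiny T}}]^{\mbox{\tiny T}}$, the scalar $\mathbf{A}^{\mbox{\tiny T}}\mathbf{v}=\sum_{k=1}^{L}\mathrm{sign}(\mathbf{h}_k)^{\mbox{\tiny T}}\mathbf{h}_{k-1}$, and the image of the unit cube under $\mathbf{B}^{\mbox{\tiny T}}$ is the interval $[-\|\mathbf{h}_L\|_1,\|\mathbf{h}_L\|_1]$. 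Hence, with the notation of the excerpt, $E_{\alpha}^1=\{\|\mathbf{h}_L\|_1\geq\alpha\}$ and $E_{\alpha}^2=\{|\mathbf{A}^{\mbox{\tiny T}}\mathbf{v}|\leq\alpha\}$, and it suffices to bound $\mathbb{P}(E_{\alpha}^1)$ and $\mathbb{P}(E_{\alpha}^2)$ for a well-chosen $\alpha$ and show they multiply (the reduction via Lemma~\ref{lemma:b} in fact gives $\mathcal{P}=\mathbb{P}(|\mathbf{A}^{\mbox{\tiny T}}\mathbf{v}|\leq\|\mathbf{h}_L\|_1)$ exactly, which is consistent with this).

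The central step is to prove that $\mathbf{A}^{\mbox{\tiny T}}\mathbf{v}$ is \emph{exactly} a centered Gaussian, despite its nonlinear dependence on the channel through the sign patterns. Write $\mathbf{A}^{\mbox{\tiny T}}\mathbf{v}=\sum_{k=1}^{L}T_k$ with $T_k:=\mathrm{sign}(\mathbf{h}_k)^{\mbox{\tiny T}}\mathbf{h}_{k-1}$, and condition on $\mathcal{F}_k:=\sigma(\mathbf{h}_k,\dots,\mathbf{h}_L)$. Since $\mathrm{sign}(\mathbf{h}_k)$ is $\mathcal{F}_k$-measurable while under A.4 the tap $\mathbf{h}_{k-1}\sim\mathcal{N}(0,\tfrac{1}{L+1}\mathbf{I}_M)$ is independent of $\mathcal{F}_k$, we get $T_k\mid\mathcal{F}_k\sim\mathcal{N}(0,\tfrac{M}{L+1})$, a law that does not depend on $\mathcal{F}_k$; as $T_{k+1},\dots,T_L$ are $\mathcal{F}_k$-measurable, this backward-conditioning argument forces $T_1,\dots,T_L$ to be i.i.d.\ $\mathcal{N}(0,\tfrac{M}{L+1})$, so $\mathbf{A}^{\mbox{\tiny T}}\mathbf{v}\sim\mathcal{N}\!\big(0,\tfrac{ML}{L+1}\big)$. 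The same structure also settles independence: $E_{\alpha}^1$ is a function of $|\mathbf{h}_L|$ only, while $\mathbf{A}^{\mbox{\tiny T}}\mathbf{v}$ involves $\mathbf{h}_L$ only through $\mathrm{sign}(\mathbf{h}_L)$, and for a Gaussian vector sign and magnitude are independent; therefore $E_{\alpha}^1\perp E_{\alpha}^2$ and $\mathbb{P}(E_{\alpha}^1\cap E_{\alpha}^2)=\mathbb{P}(E_{\alpha}^1)\,\mathbb{P}(E_{\alpha}^2)$.

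It then remains to choose $\alpha$ and evaluate. Take $\alpha=(1-\epsilon)\mu$ with $\mu:=\mathbb{E}\|\mathbf{h}_L\|_1=M\sqrt{2/(\pi(L+1))}$ and $\epsilon\in[0,1]$. For $E_{\alpha}^2$, the exact Gaussianity gives $\mathbb{P}(E_{\alpha}^2)=\mathrm{erf}\big(\alpha/(\sqrt{2}\,\sigma)\big)$ with $\sigma^2=ML/(L+1)$; substituting $\alpha$ yields $\alpha/(\sqrt2\,\sigma)=(1-\epsilon)\sqrt{M/(\pi L)}$, and the identity $\mathrm{erf}(x)=\tfrac{1}{\sqrt{\pi}}\gamma(\tfrac12,x^2)$ turns this into $\tfrac{1}{\sqrt{\pi}}\gamma\!\big(\tfrac12,\tfrac{M(1-\epsilon)^2}{\pi L}\big)$. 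For $E_{\alpha}^1$, the map $\mathbf{x}\mapsto\|\mathbf{x}\|_1$ is $\sqrt{M}$-Lipschitz, so one-sided Gaussian concentration applied to $\mathbf{h}_L\sim\mathcal{N}(0,\tfrac{1}{L+1}\mathbf{I}_M)$ gives $\mathbb{P}\big(\|\mathbf{h}_L\|_1<(1-\epsilon)\mu\big)\leq\exp\!\big(-\epsilon^2\mu^2(L+1)/(2M)\big)=\exp(-M\epsilon^2/\pi)$, i.e.\ $\mathbb{P}(E_{\alpha}^1)\geq 1-\exp(-M\epsilon^2/\pi)$. Multiplying the two bounds and maximizing over $\epsilon\in[0,1]$ delivers \eqref{eq:probability}.

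The main obstacle is the second paragraph: seeing, and proving, that the scalar $\mathbf{A}^{\mbox{\tiny T}}\mathbf{v}$ is a genuine centered Gaussian with variance $ML/(L+1)$ rather than just an asymptotically Gaussian object, and that the two events decouple. Once the backward-conditioning identity is in hand, the remaining work—computing $\mu$, tracking the $2/\pi$ factors so the concentration exponent matches exactly, and rewriting the Gaussian probability through the lower incomplete gamma function—is routine.
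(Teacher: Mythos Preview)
Your proposal is correct and follows essentially the same route as the paper: factor $\mathbb{P}(E_\alpha^1\cap E_\alpha^2)$ by independence, use that $\mathbf{A}^{\mbox{\tiny T}}\mathbf{v}$ is centered Gaussian with variance $ML/(L+1)$ to express $\mathbb{P}(E_\alpha^2)$ via the incomplete gamma function, and bound $\mathbb{P}(E_\alpha^1)$ by a Gaussian concentration inequality with $\alpha=(1-\epsilon)\,\mathbb{E}\|\mathbf{h}_L\|_1$. In fact you supply more than the paper does---the backward-conditioning argument showing the $T_k$ are i.i.d.\ Gaussian and the sign/magnitude decoupling that yields $E_\alpha^1\perp E_\alpha^2$---whereas the paper simply asserts both facts.
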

\begin{proof}
	When $\delta=1$, it is easy to note that the events $E_{\alpha}^1$ and $E_{\alpha}^2$ are independent. Therefore, 
	$$
	\mathbb{P}(E_{\alpha}^1\cap E_{\alpha}^2)=\mathbb{P}(E_{\alpha}^1)\mathbb{P}(E_{\alpha}^2)
	$$
	On the other hand, ${\bf v}^{\mbox{\tiny T}}{\bf A}$ is a scalar Gaussian random variable with mean $0$ and variance $\frac{LM}{L+1}$. Therefore,
	\begin{align*}
		\mathbb{P}\left[\left|{\bf v}^{\mbox{\tiny T}}{\bf A}\right|\leq \alpha\right]&=\mathbb{P}\left[\left|{\bf v}^{\mbox{\tiny T}}{\bf A}\right|^2\frac{(L+1)}{LM}\leq \alpha^2\frac{(L+1)}{LM}\right] \\				     &=\frac{1}{\sqrt{\pi}}\gamma\left(\frac{1}{2},\frac{\alpha^2(L+1)}{2LM}\right)
	\end{align*}
	On the other hand, using standard concentration inequalities (See Theorem 2 in \cite{Gribonval-07}), we have:
	$$
	\mathbb{P}\left[\sqrt{L+1}\|{\bf h}_L\|_1\leq (1-\epsilon)\sqrt{\frac{2}{\pi}}M\right] \leq \exp\left(-\frac{\epsilon^2 M}{\pi}\right)
	$$
	Therefore, 
	$$
	\mathbb{P}\left[\sqrt{L+1}\|{\bf h}_L\|_1\leq (1-\epsilon)\sqrt{\frac{2}{\pi}}M\right] \leq \exp\left(-\frac{\epsilon^2 M}{\pi}\right)
	$$
As a consequence,
	$$
	\mathbb{P}\left[\|{\bf h}_L\|_1\leq (1-\epsilon)\sqrt{\frac{2}{\pi(L+1)}}M\right] \leq \exp\left(-\frac{\epsilon^2 M}{\pi}\right)
	$$
	This gives thus the following probability lower bound :
	\begin{align*}
		\mathbb{P}\left[\|{\bf h}_L\|_1\geq (1-\epsilon)\sqrt{\frac{2}{\pi(L+1)}} M\right] \geq 1-\exp\left(-\frac{M\epsilon^2}{\pi}\right)
\end{align*}
Taking $\alpha=(1-\epsilon)\sqrt{\frac{2}{\pi(L+1)}}M$, we get :
$$
\mathcal{P}\geq \max_{\epsilon\in\left[0,1\right]} \left(1-\exp\left(-\frac{M\epsilon^2}{\pi}\right)\right)\frac{1}{\sqrt{\pi}}\gamma(\frac{1}{2},\frac{M(1-\epsilon)^2}{\pi L})
$$
\end{proof}
{\bf Interpretation :}
From the result shown above, one can note that as the number of antennas increases, the probability that the channel identifiability condition occurs tends to $1$. On the other side, the order of the channel $L$ might tend to have the opposite effect : it tends to make it fail when it increases. 
\section{Simulation Results}
\label{sec:simulation}
We present here simulation results for the $\ell_1$ norm. Fig. \ref{fig:prob} displays the effect of the system parameters $L$ and $M$ on the lower bound probability that we have computed by maximizing numerically \eqref{eq:probability}. Fig. \ref{fig:prob} shows that increasing the number of antennas tends to enhance the channel identifiability, a result which has been previously mentioned.
\begin{figure}[h]
	\begin{center}
	\includegraphics[scale=0.5]{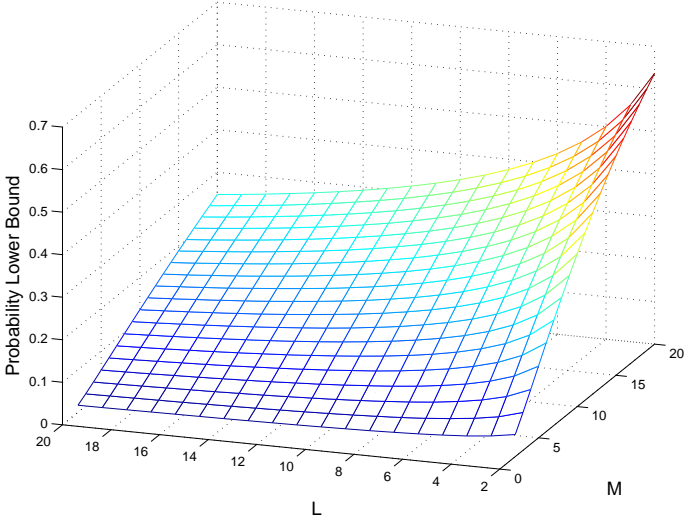}
\end{center}
	\caption{Impact of the system parameters $L$ and $M$ on the lower bound probability}
	\label{fig:prob}
\end{figure}



\appendices
\section{Proof of Theorem \ref{th:necessary}}
\label{app:necessary}
The proof of theorem \ref{th:necessary} will rely on the following mathematical results about the optimization of convex functions, which we provide for sake of completeness. 
\begin{Definition}
	Let $f: \mathbb{R}^n\mapsto \mathbb{R}$ be a real-valued function. The directional derivative of $f$ at ${\bf x}_0$ is given by : 
	$$
	f'(x_0,y)= \inf_{t>0} \frac{f({\bf x}_0+ty)-f({\bf x}_0)}{t}
	$$
	\label{def:directional_derivative}
\end{Definition}
\begin{theorem}
	Let $f$ be a convex function defined on a convex set $X$ and ${\bf x}_0\in X$ be a point where $f$ is finite. Then ${\bf x}_0$ is a global minimum point of $f$ if and only of the following condition holds:
	$$
	f'({\bf x}_0,{\bf x}-{\bf x}_0)\geq 0 \ \  \forall \ \ {\bf x}\in X.
	$$
\end{theorem}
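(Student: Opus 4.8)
The plan is to prove the two implications separately, and the only ingredient beyond bookkeeping is the standard convexity fact that, for a fixed base point ${\bf x}_0$ and direction ${\bf y}$, the slope $t\mapsto \frac{f({\bf x}_0+t{\bf y})-f({\bf x}_0)}{t}$ is nondecreasing in $t>0$. First I would record two consequences of this that will be used directly: (i) the infimum in Definition~\ref{def:directional_derivative} equals the infimum taken only over $t\in(0,1]$ (enlarging the range to large $t$ cannot lower a nondecreasing quantity), so the directional derivative is determined by the behaviour of $f$ on the segment $[{\bf x}_0,{\bf x}]$, which lies inside $X$ by convexity of $X$; and (ii) $f'({\bf x}_0,{\bf y})\leq \frac{f({\bf x}_0+{\bf y})-f({\bf x}_0)}{1}$, i.e.\ the value of the slope at $t=1$.

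For necessity, suppose ${\bf x}_0$ minimizes $f$ over $X$. Given any ${\bf x}\in X$ and any $t\in(0,1]$, the point $(1-t){\bf x}_0+t{\bf x}$ lies in $X$, so $f((1-t){\bf x}_0+t{\bf x})\geq f({\bf x}_0)$ and the corresponding slope is nonnegative; by (i) the infimum over all $t>0$ is then nonnegative, i.e.\ $f'({\bf x}_0,{\bf x}-{\bf x}_0)\geq 0$. For sufficiency, fix any ${\bf x}\in X$ and apply (ii) with ${\bf y}={\bf x}-{\bf x}_0$: this gives $f'({\bf x}_0,{\bf x}-{\bf x}_0)\leq f({\bf x})-f({\bf x}_0)$, and combining with the hypothesis $f'({\bf x}_0,{\bf x}-{\bf x}_0)\geq 0$ yields $f({\bf x})\geq f({\bf x}_0)$. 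Since ${\bf x}\in X$ was arbitrary, ${\bf x}_0$ is a global minimum.

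The step requiring the most care --- the main, if modest, obstacle --- is consequence (i): the directional derivative is an infimum over all $t>0$, whereas the argument only controls $f$ on the portion of the ray that stays in $X$, namely $t\in(0,1]$. Convexity resolves this because the infimum of a nondecreasing function over $(0,\infty)$ coincides with its infimum over any initial interval $(0,\varepsilon]$; I would therefore state and justify the monotone-slope ("three chords") inequality explicitly, since everything else in both directions is then immediate. If one prefers to avoid invoking monotonicity in the necessity part, an alternative is to show directly that a global minimum over $X$ forces every slope along the ray to be nonnegative, using the convex combination ${\bf x}=(1-\tfrac1t){\bf x}_0+\tfrac1t({\bf x}_0+t({\bf x}-{\bf x}_0))$ for $t>1$ together with $f({\bf x})\geq f({\bf x}_0)$.
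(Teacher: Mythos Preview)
The paper does not prove this theorem; it is merely quoted (together with Definition~\ref{def:directional_derivative}) as a standard convex-analysis fact ``for sake of completeness'' and then applied to the function $f({\bf g})=\|{\bf h}_z+\widetilde{\bf H}{\bf g}\|_1$. So there is no in-paper argument to compare against.

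Your proof is correct. The monotone-slope (three-chords) inequality is exactly the right engine: it gives (i), which lets you replace the infimum over all $t>0$ by the infimum over $t\in(0,1]$ and thus stay inside $X$ for the necessity direction, and it gives (ii), which immediately yields sufficiency by evaluating at $t=1$. One small caveat on the optional ``alternative'' you sketch at the end: writing ${\bf x}$ as a convex combination of ${\bf x}_0$ and ${\bf x}_0+t({\bf x}-{\bf x}_0)$ for $t>1$ tacitly requires $f$ to be defined (and convex) at the latter point, which need not lie in $X$; this is harmless in the paper's application, where $f$ is an $\ell_1$ norm of an affine map defined on all of $\mathbb{R}^{L'-L}$, but in the general setting one would need the standard $+\infty$ extension of $f$ outside $X$. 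Your primary route via (i) avoids this issue entirely.
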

We are now in position to establish theorem \ref{th:necessary}. 

Let $f$ be the convex function defined as :
$$
f({\bf g})=\|{\bf h}_z+\widetilde{\bf H}{\bf g}\|_1.
$$
Then, $f({\bf g})$ can be also given by: 
$$
f({\bf g})=\|\widetilde{\bf h}+{\bf Ag}\|_1+\|{\bf Bg}\|_1
$$
where $\widetilde{\bf h}=\left[{\bf h}_1^{\mbox{\tiny T}},\cdots,{\bf h}_L^{\mbox{\tiny T}}\right]^{\mbox{\tiny T}}$.


It is clear that channel identifiability condition occurs if ${\bf g}={\bf 0}$ is a global minimum. By definition \ref{def:directional_derivative}, this implies that
\begin{equation}
f'(0,{\bf y})\geq 0 \ \ \forall \ \ {\bf y}\in \mathbb{R}^{L'-L}
\label{eq:condition}
\end{equation}
Since
\begin{align*}
	f'(0,{\bf g})&=\inf_{t>0} \frac{\|\widetilde{\bf h}+t{\bf Ag}\|_1+t\|{\bf Bg}\|_1-{\|\widetilde{\bf h}\|_1}}{t} \\
		    &={\bf v}^{\mbox{\tiny T}}{\bf Ag}+\|{\bf By}\|_1. 
\end{align*}
equation \eqref{eq:condition} implies that:
$$
\frac{\left|{\bf v}^{\mbox{\tiny T}}{\bf A}{\bf g}\right|}{\|{\bf Bg}\|_1}\leq 1.
$$
\section{Proof Theorem \ref{th:sufficient}}
\label{app:sufficient}
To prove theorem \ref{th:sufficient}, it suffices to establish that when condition \eqref{eq:sufficient} is satisfied, vector ${\bf g}={\bf 0}$ is a local minimum of the function
$$
f_p({\bf g})=\|{\bf h}_z+\widetilde{\bf H}{\bf g}\|_p^p=\|\widetilde{\bf h}+{\bf Ag}\|_p^p+\|{\bf Bg}\|_p^p
$$
where $\widetilde{\bf h}=\left[{\bf h}_1^{\mbox{\tiny T}},\cdots,{\bf h}_L^{\mbox{\tiny T}}\right]^{\mbox{\tiny T}}$.
In other words, we need to show that for all $\epsilon>0$ there exists a neighborhood in which all vectors ${\bf g}$ satisfy $\|\widetilde{\bf h}+{\bf A g}\|_p^p+\|{\bf Bg}\|_p^p \geq \|\bf h\|_p^p$. 

Let $\widetilde{h}_i$ denote the $i-$th entry of ${\bf h}$ and ${\bf e}_i$ be the vector with $1$ on the $i-$th position and $0$ elsewhere. Let $\rho_i={\rm sign}(\widetilde{h}_i)$. Then, it can be easy to show that :
\begin{align*}
	\left|\widetilde{h}_i+{\bf e}_i^{\mbox{\tiny T}}{\bf Ag}\right|^p&=\left||\widetilde{h}_i|+\rho_i{\bf e}_i^{\mbox{\tiny T}}{\bf Ag}\right|^p \\				 &=|\widetilde{h}_i|^p\left|1+\frac{\rho_i{\bf e}_i^{\mbox{\tiny T}}{\bf Ag}}{|\widetilde{h}_i|}\right|^p
\end{align*}
Thereby, the Taylor expansion of $\left|\widetilde{h}_i+{\bf e}_i^{\mbox{\tiny T}}{\bf Ag}\right|^p$ when ${\bf g}$ lies in a neighborhood of ${\bf 0}$ can be given by :
$$
\left|\widetilde{h}_i+{\bf e}_i^{\mbox{\tiny T}}{\bf Ag}\right|^p=|\widetilde{h}_i|^p\left(1+\frac{p\rho_i{\bf e}_i^{\mbox{\tiny T}}{\bf Ag}}{|\widetilde{h}_i|}+\frac{p(p-1)}{2}\frac{|{\bf e}_i^{\mbox{\tiny T}}{\bf A g}|^2}{|\widetilde{h}_i|^2}\right)+o(\|{\bf g}\|^3)
$$
Hence,
$$
\frac{|\widetilde{h}_i+{\bf e}_i^{\mbox{\tiny T}}{\bf Ag}|^p-|\widetilde{h}_i|^p-p\rho_i|\widetilde{h}_i|^{p-1}{\bf e}_i^{\mbox{\tiny T}}{\bf Ag }}{p\rho_i|\widetilde{h}_i|^{p-1}{\bf e}_i^{\mbox{\tiny T}}{\bf Ag}}\xrightarrow[{\bf g}\to {\bf 0}]{} 0.
$$
As a consequence, for all $\epsilon>0$, there exists a neighborhood $V_{\epsilon,i}$ of $0$ such that : 
$$
\left|\widetilde{h}_i+{\bf e}_i^{\mbox{\tiny T}}{\bf Ag}\right|^p \geq |\widetilde{h}_i|^p+p|\widetilde{h}_i|^{p-1}\rho_i{\bf e}_i^{\mbox{\tiny T}}{\bf Ag}-\epsilon p |\widetilde{h}_i|^{p-1}|{\bf e}_i^{\mbox{\tiny T}}{\bf Ag}|
$$
We can assume without loss of generality that:
$$
\sum_{i=1}^{LM}\rho_i|\widetilde{h}_i|^{p-1}{\bf e}_i^{\mbox{\tiny T}}{\bf Ag}\leq 0.
$$
Because, otherwise, one can prove easily for $\epsilon$ small enough, the existence of a neighborhood  in which the following desired inequality holds
$$
\|\widetilde{\bf h}+{\bf Ag}\|_p^p+\|{\bf Bg}\|_p^p \geq \|\widetilde{\bf h}\|_p^p.
$$
Since $\sum_{i=1}^{LM}p|\widetilde{h}_i|^{p-1}\rho_i{\bf e}_i^{\mbox{\tiny T}}{\bf Ag}=-\left|\sum_{i=1}^{LM}p |\widetilde{h}_i|^{p-1}\rho_i{\bf e}_i^{\mbox{\tiny T}}{\bf Ag}\right|,$
we get for all ${\bf g}\in \widetilde{V}_{\epsilon}=\displaystyle{\cap_{i=1,\cdots,LM}V_{\epsilon,i}}$, 
\begin{equation}
\sum_{i=1}^{LM}\left||\widetilde{h}_i|+\rho_i{\bf e}_i^{\mbox{\tiny T}}{\bf Ag}\right|^p \geq \sum_{i=1}^{LM} |\widetilde{h}_i|^p -\left|\sum_{i=1}^{LM}p|\widetilde{h}_i|^{p-1}\rho_i{\bf e}_i^{\mbox{\tiny T}}{\bf Ag}\right|-\epsilon p \sum_{i=1}^{LM} |\widetilde{h}_i|^{p-1}|{\bf e}_i^{\mbox{\tiny T}}{\bf Ag}|
\label{eq:ineq_f}
\end{equation}
On the other hand, as $p<1$, one can easily check that
$$
\frac{\|{\bf Bg}\|_p^p}{\|{\bf Bg}\|_1-\epsilon p \sum_{i=1}^{LM}|\widetilde{h}_i|^{p-1}|{\bf e}_i^{\mbox{\tiny T}}{\bf Ag}|} \xrightarrow[{\bf g}\to 0]{} +\infty
$$
Choose $\epsilon$ such that 
$$
\|{\bf Bg}\|_1\geq \epsilon p \sum_{i=1}^{LM}|\widetilde{h}_i|^{p-1}|{\bf e}_i^{\mbox{\tiny T}}{\bf Ag }|.
$$
This is possible since, with probability equal to one,  $\|{\bf Bg}\|=0$ if and only if ${\bf g}=0$. 
Hence, there exists a neighborhood $V_\epsilon$ such that for all ${\bf g}\in V_\epsilon$, we have:
\begin{equation}
\|{\bf Bg}\|_p^p \geq \|{\bf Bg}\|_1 -  \epsilon p \sum_{i=1}^{LM}|\widetilde{h}_i|^{p-1}|{\bf e}_i^{\mbox{\tiny T}}{\bf Ag }|
\label{eq:bgp}
\end{equation}
On the other hand, the channel identifiability condition \eqref{eq:sufficient} implies that
\begin{equation}
\|{\bf Bg}\|_1\geq \left|\sum_{i=1}^{LM}p|\widetilde{h}_i|^{p-1}\rho_i{\bf e}_i^{\mbox{\tiny T}}{\bf Ag}\right|
\label{eq:bg1}
\end{equation}
Plugging \eqref{eq:bgp} into \eqref{eq:bg1}, we get:
$$
\|{\bf Bg}\|_p^p\geq \left|\sum_{i=1}^{LM}p|\widetilde{h}_i|^{p-1}\rho_i{\bf e}_i^{\mbox{\tiny T}}{\bf Ag}\right|-\epsilon p \sum_{i=1}^{LM}|\widetilde{h}_i|^{p-1}|{\bf e}_i^{\mbox{\tiny T}}{\bf Ag }|.
$$
This proves by using \eqref{eq:ineq_f}, that for all ${\bf g}\in V_\epsilon \cap \widetilde{V}_{\epsilon}$, we have:
$$
\|\widetilde{\bf h}+{\bf Ag}\|_p^p+\|{\bf Bg}\|_p^p\geq \|\widetilde{\bf h}\|_p^p,
$$
thereby proving the result.
\bibliographystyle{IEEEbib}
\bibliography{mybib}

\end{document}